\def\@rmrk#1#2{\refstepcounter
    {#1}\@ifnextchar[{\@yrmrk{#1}{#2}}{\@xrmrk{#1}{#2}}}
\makeatletter\@addtoreset{equation}{section}\makeatother
\newfont{\bfit}{cmbxti10 scaled 2000}
\newfont{\biggi}{cmr12 scaled 2000}
\newtheorem{step}{STEP}
\newcommand{\bes}{\begin{step}}
\newcommand{\es}{\end{step}}
 \newcommand{\eps}{\varepsilon}
 \newcommand{\essinf}{{\rm essinf}\,}
 \newcommand{\R}{\mathbb{R}}
 \newcommand{\me}{\mathbb{E}}
 \renewcommand{\P}{\mathbb{P}}
 \newcommand{\skrib}{{\mathcal B}}
 \newcommand{\skric}{{\mathcal C}}
 \newcommand{\skrif}{{\mathcal F}}
 \newcommand{\skrig}{{\mathcal G}}
 \newcommand{\skrih}{{\mathcal H}}
 \newcommand{\skril}{{\mathcal L}}
 \newcommand{\skrim}{{\mathcal M}}
 \newcommand{\skrin}{{\mathcal N}}
 \newcommand{\skrix}{{\mathcal X}}
 \newcommand{\sfrac}[2]{\mbox{$\frac{#1}{#2}$}}
\def\1{{\mathchoice {1\mskip-4mu\mathrm l}      
{1\mskip-4mu\mathrm l}
{1\mskip-4.5mu\mathrm l} {1\mskip-5mu\mathrm l}}}
\newcommand{\eq}{\begin{equation}}
\newcommand{\en}{\end{equation}}
\renewcommand{\subsection}{\secdef \subsct\sbsect}
\newcommand{\subsct}[2][default]{\refstepcounter{subsection}
\vspace{0.15cm}
{\flushleft\bf \arabic{section}.\arabic{subsection}~\bf #1  }
\nopagebreak\nopagebreak}
\newcommand{\sbsect}[1]{\vspace{0.1cm}\noindent
{\bf #1}\vspace{0.1cm}}
\newtheorem{theorem}{Theorem}[section]
\newtheorem{lemma}[theorem]{Lemma}
\newtheoremstyle{thm}{1.5ex}{1.5ex}{\itshape\rmfamily}{}
{\bfseries\rmfamily}{}{2ex}{}
\newtheoremstyle{rem}{1.3ex}{1.3ex}{\rmfamily}{}
{\itshape\rmfamily}{}{1.5ex}{}
\theoremstyle{rem}
\def\thebibliography#1{\section*{reference}
  \list%
  {\arabic{enumi}.}
    {\settowidth\labelwidth{[#1]}\leftmargin\labelwidth
    \advance\leftmargin\labelsep
    \parsep0pt\itemsep0pt
    \usecounter{enumi}}
    \def\newblock{\hskip .11em plus .33em minus .07em}
    \sloppy                   
    \sfcode`\.=1000\relax}
\begin{document}
\title[Lossy  version  of  AEP for WSN]
{\Large Lossy  Asymptotic Equipartition Property for Geometric Networked Data Structures}

\author[K. Doku-Amponsah]{}

\maketitle
\thispagestyle{empty}
\vspace{-0.5cm}

\centerline{\sc{By Kwabena Doku-Amponsah}}
\renewcommand{\thefootnote}{}
\footnote{\textit{Mathematics Subject Classification :} 94A15,
 94A24, 60F10, 05C80} \footnote{\textit{Keywords: } Information theory, rate-distortion theory,exponential  equivalent  measures,
 relative entropy, geometric  networked data structures, wireless  sensor  networks.}
\renewcommand{\thefootnote}{1}
\renewcommand{\thefootnote}{}
\footnote{\textit{Address:} Statistics Department, University of
Ghana, Box LG 115, Legon,Ghana.\,
\textit{E-mail:\,kdoku@ug.edu.gh}.}
\renewcommand{\thefootnote}{1}
\centerline{\textit{University of Ghana}}

\begin{quote}{\small }{\bf Abstract.} This  article  extends the  Generalized  Asypmtotic  Equipartition  Property  of
Networked  Data Structures  to cover  the  Wireless Sensor Network   modelled  as  coloured geometric random  graph (CGRG).
The  main  techniques  used  to  prove  this result remains  large  deviation principles  for properly defined empirical
measures  on  CGRGs. As  a  motivation  for  this  article, we apply our  results to  some  data from  Wireless  Sensor Network for  Monitoring Water Quality from a Lake.

\end{quote}\vspace{0.5cm}

\section{Introduction}
Field data  we often  encounter from the  study  of the  environment  are  usually  structured  according to geometry and  the  connectivity  between  the  locations that  make  up  the  environment. Example,data  from  (i)  monitoring air quality
at key industrial sites, (ii) looking  for key contaminating
agents from the exhausts of public buses, (iii) monitoring the cleanliness in lakes  and  many more,  are all  structured according  to  the  geometry of the area  of  study and the  connectivity of the location that make up  the environment.   To design  and  implement  simplex (Linear  programming)  algorithm for the  solution of  generalized network flow problems of  the geometric structured network data ,see example \cite{ACS84}, or  to  find  an  efficient  coding  scheme or an  approximate  pattern  matching  algorithms,  see example \cite{CT91}, we need  an  information theory  for such data  structures, and  the  lossy  Asymptotic  Equipartition  Property  (AEP) for the geometric networked data structures is  key  to  finding  an information  theory  for the data  structure. See  \cite{DA16a} and  \cite{DA16b}  for  similar  results  for  other  types  of  data  structures.\\

 The  aim of this  article is  to  extend  the Lossy  AEP for Networked  Data  Structures modelled as Coloured  Random  Graph (CRG),
 see \cite[Theorem~2.1]{DA16a},  to cover the  WSN.
 To  be  specific  we  model  the   Geometric Networked Data Structures ( WSN)   as  a  CGRG and  use  some   of the
 large  deviation  techniques   developed  in  \cite{DA16b}  to  prove  a  strong  law  of  large  numbers (SLLN),
  see Lemma~\ref{WLLN7}, for  the  random network. Using  the   SLLN  and  the  techniques  deployed  in  \cite{DK02}
  we  extend the  Lossy  AEP  to  cover  the  WSN.\\

  The  remaining  part  of  the  paper  is  organized  as  follows: Section~\ref{Main} contains  the  main  result  of  the  paper  and  an  application to some data from
   environmental  science. See, Theorem~\ref{AEP1}  in Subsection~\ref{Main1}  and the  application in  Subsection~\ref{Main2}.
   Section~\ref{mainproof}  gives the  proof of  the  main  result; starting  with  the LDPs  (Lemmas~\ref{WLLN1}  and \ref{WLLN2})  in  Subsection~\ref{LDPs},   followed by  statement  and  proof  of  a  strong law  of  large  numbers,  see  Lemma~\ref{SLLN} and ending  with derivation  of  the  main  results  from  the  SLLN  in  subsection~\ref{AEPs}.

  \section{Generalized  AEP for CGRG  Process}\label{Main}

\subsection{Main Result}\label{Main1}

  We  consider  two  CGRG  processes   $X^{[z]}=\big\{(X(z_1), X(z_2)):\, z_i\,z_j\in E,\,i,j=1,2,3,...,n,i\not= j\big\}$  and
$Y^{[z]}=\big\{ (Y(z_i), Y(z_j)):\, z_i\,z_j\in E,\,i,j=1,2,3,...,n,i\not= j\big\}$  which  take  values  in  $G_{[z]}=G(\skrix,\, z_1,z_2,z_3,...,z_n)$  and
$\hat{G}_{[z]}=\hat{G}(\skrix,\, z_1,z_2,z_3,...,z_n),$ resp.,  the  spaces  of  finite graphs on
$\skrix$ and  $\, z_1,z_2,z_3,...,z_n\in [0,\,1]^d.$
We  equip  $G_{[z]}$, $\hat{G}_{[z]}$  with  their  Borel $\sigma-$fields $\skrif_{x}$
and  $\hat{\skrif}_{x}.$  Let  $\P_{x}$  and  $\P_{y}$  denote  the  probability measures  of the entire  processes $X^{[z]}$ and  $Y^{[z]}.$
 By $\P_{x}^{(\pi\omega)}$  and  $\P_{y}^{(\pi\omega)}$   we  denote  the   coloured  geometric random graphs $X^{[z]}$ and  $Y^{[z]}$ conditioned
 to have empirical colour measure  $\pi$ and empirical pair measure~$\omega.$ See, example  \cite{DA06}.
 We always  assume  that  $X^{[z]}$  and  $Y^{[z]}$  are  independent  of  each  other.\\

 By  $\skrix$  we  denote a  finite  alphabet  and  denote   by $\skrin(\skrix)$  the  space  of  counting  measure  on   $\skrix$
 equipped  with  the  discrete topology. By  $\skrim(\skrix)$  we  denote  the  space of  probability  measures  on  $\skrix$  equipped  with  the weak  topology  and  $\skrim_*(\skrix)$  the  space
 of finite  measures  on  $\skrix$  equipped  with  the  weak  topology.\\   



We define  the  process-level  empirical  measure  $\skril_{n,[z]}$  induced  by  $X^{[z]}$  and  $Y^{[z]}$  on   $G_{[z]}\times \hat{G}_{[z]}$ by
$$\skril_{n,[z]}(\beta_x(z),\beta_y(z))=\frac{1}{n}\sum_{v\in [n]}\delta_{\big(\skrib_X(z_v),\,\skrib_Y(z_v)\big)}(\beta_x(z),\beta_y(z)), \, \mbox{ for $(\beta_x(z),\beta_y(z))\in\skrim[{(\skrix\times\skrin(\skrix))}^2].$ }$$

$$\skril_{n,[z],1}(\beta_x(z)):=\frac{1}{n}\sum_{v\in [n]}\delta_{\big(\skrib_X(z_v)\big)}(\beta_x(z))\,\mbox{and}\,  \skril_{n,[z],2}(\beta_y(z)):=\frac{1}{n}\sum_{v\in [n]}\delta_{\big(\skrib_Y(z_v)\big)}(\beta_y(z)) \, $$
 for $(\beta_x(z),\beta_y(z))\in\skrim[{(\skrix\times\skrin(\skrix))}^2].$

Throughout the  rest  of  the  article  we  will  assume that  $X^{[z]}$  and  $Y^{[z]}$   are   CGRG processes, See \cite{Pe98}.
For  $n\ge 1$,  let  $P_x^{(n)}$  denote  the  marginal  distribution  of  $X^{[z]}$  on  $[n]=\{1,2,3,...,n\}$  taking with  respect
to  $\P_{x}^{(\pi\omega)}$ and  $Q_y^{(n)}$  denote  the  marginal distribution    $Y^{[z]}$   on  $[n]=\{1,2,3,...,n\}$  with  respect  to $\P_{y}^{(\pi\omega)}.$ \\

Let  $\sigma:\skrix\times\skrin(\skrix)\times\skrix\times\skrin(\skrix)\to[0,\infty)$  be
an arbitrary  non-negative  function and  define  a  sequence of  single-letter  distortion  measures
$\sigma^{(n)}:G_{[z]}\times\hat{G}_{[z]}\to[0,\infty),$  $n\ge 1$  by
$$\sigma^{(n)}(x,y)=\frac{1}{n}\sum_{i\in [n]}\sigma\Big(\skrib_x(z_i),\,\skrib_y(z_i)\Big),$$

where $\skrib_x(z_i)=(x(z_i), L_x(z_i))$  and  $\skrib_y(z_i)=(y(z_i), L_y(z_i)).$  Given  $\alpha\ge 0$   and  $x\in G_{[z]}$ ,
 we  denote the  distortion-ball  of  radius  $\alpha$  by
$$B(x,\alpha)=\Big\{y\in\hat{G}_{[z]}:\,\, \sigma^{(n)}(x,y)\le \alpha\Big\}.$$

We  shall  call  the  measure  $\mu\in  \skrim[{(\skrix\times\skrin(\skrix))}^2]$  consistent  if  $ \mu_1$,  $\mu_2$  are  both  consistent   marginals  of  $\mu.$  Refer to  \cite[Equation~2.1]{DA16b}  for  the  concept  of  consistent  measures.

For  $(\pi,\,\omega)\in \skrim(\skrix)\times\skrim(\skrix\times\skrix),$  we write
$$p_{\pi\omega}(a,l)=\pi(a)\prod_{b\in\skrix}\frac{e^{-\omega(a,b)/\pi(a)}[\omega(a,b)/\pi(a)]^{\ell(b)}}{\ell(b)!},\,\mbox{for
$\ell\in\skrin(\skrix)$  } $$  and     define  the  rate  function  $I_1:\skrim[{(\skrix\times\skrin(\skrix))}^2]\to [0,\, \infty]$  by

\begin{equation}\label{AEP3}
\begin{aligned}
J_1(\mu)= \left\{ \begin{array}{ll}H\big(\mu\,\|\,p_{\pi\omega}\otimes p_{\pi\omega}), &
\mbox{if $\mu$  is  consistent  and $\mu_{1,1}=\mu_{1,2}=\pi$,}\\

 \infty & \mbox{otherwise,}

\end{array}\right.
\end{aligned}
\end{equation}

where  $$\,p_{\pi\omega}\otimes p_{\pi\omega}\big((a_x,a_y),(l_{x},l_{y})\big)
=p_{\pi\omega}(a_x,\,l_x)p_{\pi\omega}(a_y,\,l_y).$$ \\

  By  $x\,\approx\, p$  we  mean  $x$  has distribution  $p.$  For $(\pi,\,\omega)\in \skrim(\skrix)\times\skrim(\skrix\times\skrix),$  we  write
 $$\alpha_{av}(\pi,\omega)=\langle \log \langle e^{t\sigma(\skrib_X,\,\skrib_Y)},p_{\pi\omega}\rangle,p_{\pi\omega}\rangle.$$  
 Assume $$\alpha_{min}^{(n)}(\pi,\omega)=\me_{P_{x}^{(n)}}\big[\essinf_{Y\,\approx\,  Q_y^{(n)}}\sigma^{(n)}(X,Y)\big]\,\mbox{ $\to  \alpha_{min}(\pi,\omega).$}$$  For  $n>1,$   we  write  $$R_n(P_n^{(x)},Q_n^{(y)}, \alpha):=\inf_{V_n}\Big\{\frac{1}{n}H(V_n\,\|\,P_n^{(x)}\times Q_n^{(y)}):\,V_n\in \skrim(\skrig\times\hat{\skrig})\Big\}$$
and $$\alpha_{min}^{\infty}(\pi,\omega):=\inf\Big\{\alpha\ge 0:\,\sup_{n\ge 1}R_n(P_x^{(n)}, Q_y^{(n)}, \alpha)<\infty\Big\}.$$
Theorem~\ref{AEP1} (ii)  below  provides   a Lossy  AEP  for  WSN  data  structures.
\begin{theorem}\label{AEP1}
Suppose  $X^{[z]}$  and  $Y^{[z]}$   are  CGRG  process. Assume $\sigma$ are  bounded  function.  Then,
\begin{itemize}

\item[(i)] with  $\P_{(x)}^{(\pi\omega)}-$ probability $1,$ conditional  on  the  event $\big\{\,\Psi(\skril_{n,[z],1})=\Psi(\skril_{n,[z],2})=(\pi,\omega)\big\}$  the  random  variables  $\Big\{ \sigma^{(n)}(x,Y^{[z]})\Big\}$ satisfy  an  LDP   with  deterministic,  convex  rate-function  $$J_{\sigma}(t):=\inf_{\mu}\Big\{J_1(\mu):\, \langle\sigma, \,\mu\rangle=t\Big\}.$$
\item [(ii)] for  all $\alpha\in\Big(\alpha_{min}(\pi,\omega),\,\alpha_{av}(\pi, \omega)\Big)$,  except  possibly  at $\alpha=\alpha_{min}^{\infty}(\pi, \omega)$
\begin{equation}\label{AEP11}
\lim_{n\to\infty}-\frac{1}{n}\log Q_x^{(n)}\Big(B(X^{[z]},\alpha)\Big)=R\big(\P_{x}^{(\pi\omega)},\P_{y}^{(\pi\omega)},\alpha\big)\,\,\mbox {almost  surely,}\end{equation}
where
$R(p,q,\alpha)=\inf_{\mu}H(\mu\,\|\,p\times q).$
\end{itemize}
\end{theorem}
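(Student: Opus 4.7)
For part~(i), my plan is to lift the problem to the process-level empirical measure $\skril_{n,[z]}$. By Lemmas~\ref{WLLN1} and \ref{WLLN2}, conditional on the event $\{\Psi(\skril_{n,[z],1})=\Psi(\skril_{n,[z],2})=(\pi,\omega)\}$, $\skril_{n,[z]}$ satisfies an LDP on $\skrim[(\skrix\times\skrin(\skrix))^2]$ with good rate function $J_1$. Since $\sigma$ is bounded, the functional $\mu \mapsto \langle\sigma,\mu\rangle$ is continuous in the weak topology, and we have the identity $\sigma^{(n)}(x,Y^{[z]}) = \langle\sigma,\skril_{n,[z]}\rangle$. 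Hence the contraction principle delivers the claimed LDP with rate $J_{\sigma}(t) = \inf\{J_1(\mu)\colon \langle\sigma,\mu\rangle = t\}$. Convexity of $J_\sigma$ follows because $J_1$ is an entropy (hence convex) and the constraint $\langle\sigma,\mu\rangle=t$ is linear in $\mu$.

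For part~(ii), I would adapt the strategy of Dembo--Kontoyiannis~\cite{DK02} in two halves. Observe first that, conditioning on $X^{[z]}$,
\[
-\tfrac{1}{n}\log Q_y^{(n)}\bigl(B(X^{[z]},\alpha)\bigr)
= -\tfrac{1}{n}\log \P_{y}^{(\pi\omega)}\Bigl(\langle\sigma,\skril_{n,[z]}\rangle \le \alpha \,\Big|\, X^{[z]}\Bigr).
\]
For the upper bound on this quantity (lower bound on the probability), I would substitute the LDP of part~(i) and apply Varadhan's lemma, which gives $-\tfrac{1}{n}\log Q_y^{(n)}(B(X^{[z]},\alpha)) \le J_\sigma(\alpha) + o(1)$ almost surely for $\alpha > \alpha_{\min}(\pi,\omega)$, using boundedness of $\sigma$ to pass between the probability and its exponential moment. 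For the matching lower bound, a tilting argument is used: take an almost-minimizer $\mu^{*}$ for $J_\sigma(\alpha)$, construct the corresponding tilted conditional law on pairs, and exploit the SLLN (Lemma~\ref{SLLN}) to show that under the tilted law the empirical measure concentrates on $\mu^{*}$ so that $\langle\sigma,\skril_{n,[z]}\rangle \le \alpha$ with probability bounded away from $0$, yielding $-\tfrac{1}{n}\log Q_y^{(n)}(B(X^{[z]},\alpha)) \ge J_\sigma(\alpha) - o(1)$.

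The last step is to identify $J_\sigma(\alpha)$ with $R(\P_x^{(\pi\omega)},\P_y^{(\pi\omega)},\alpha)$. Since $\P_x^{(\pi\omega)}$ and $\P_y^{(\pi\omega)}$ are the conditional marginals whose single-site law is $p_{\pi\omega}$, a direct unfolding of the infimum defining $R$ as a relative entropy of joint laws against $p_{\pi\omega}\otimes p_{\pi\omega}$ (under the marginal and distortion constraints) matches the definition of $J_1$ restricted to the set $\{\langle\sigma,\mu\rangle\le\alpha\}$; this is precisely $J_\sigma(\alpha)$ after using monotonicity of $J_\sigma$ on $(\alpha_{\min}(\pi,\omega),\alpha_{av}(\pi,\omega))$.

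I expect the main obstacle to lie in the lower bound, specifically in turning the conditional (quenched) LDP into the almost-sure statement: one must rule out the annealed slack between typical $X^{[z]}$ and arbitrary $X^{[z]}$, for which the SLLN of Lemma~\ref{SLLN} is crucial. A secondary technical point is the exclusion of the threshold $\alpha = \alpha_{\min}^{\infty}(\pi,\omega)$, which reflects a possible discontinuity of $R_n$ at the boundary of its finite domain and is handled exactly as in~\cite{DK02} by excising that single value of $\alpha$ from the statement.
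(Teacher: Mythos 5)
Your part~(i) is essentially the paper's argument: the paper also uses the identity $\sigma^{(n)}(X^{[z]},Y^{[z]})=\langle\sigma,\skril_{n,[z]}\rangle$, the boundedness of $\sigma$ to make $\Lambda_\sigma=\{\mu:\langle\sigma,\mu\rangle\in\Lambda\}$ open/closed, and then reads off the LDP bounds from Lemma~\ref{WLLN2} --- i.e.\ the contraction principle, written out by hand. For part~(ii), however, you take a genuinely different route. The paper follows the Dembo--Kontoyiannis template on the dual side: it defines the conditional log-moment generating functions $\skrih_n(t)$, uses Varadhan's lemma (with bounded $\sigma$) to show $\tfrac1n\skrih_n \to \skrih_\infty$ with the SLLN of Lemma~\ref{WLLN7} identifying $\alpha_{av}$, proves $R_n(P_x^{(n)},Q_y^{(n)},\alpha)=\skrih_n^{*}(\alpha)$ by convex duality, and then passes to the limit via uniform-on-compacts convergence and an epi-convergence statement $\skrih_{\infty}^{*}(\alpha)=\lim_{\delta\to0}\limsup_n\inf_{|\hat\alpha-\alpha|<\delta}\skrih_n^{*}(\hat\alpha)$. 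You instead work on the primal side: upper bound on $-\tfrac1n\log Q_y^{(n)}(B(x,\alpha))$ from the LDP lower bound for the open half-line, and a matching bound from a tilting construction. Your route is more direct given part~(i) --- indeed the tilting half is arguably redundant, since the quenched LDP of part~(i) already contains the large-deviation lower bound for open sets --- and it avoids re-deriving Varadhan and the conjugate machinery.

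What each approach buys is visible at the one step you treat as routine: the identification of the limit with $R(\P_x^{(\pi\omega)},\P_y^{(\pi\omega)},\alpha)$. In your sketch this is a ``direct unfolding'' of the infimum defining $R$ against $p_{\pi\omega}\otimes p_{\pi\omega}$, but $R$ is defined through the finite-$n$ quantities $R_n$ built from $P_x^{(n)}\times Q_y^{(n)}$, and the passage $\lim_n R_n(\cdot,\cdot,\alpha)=R(\cdot,\cdot,\alpha)=J_\sigma(\alpha)$ is exactly where the exceptional point $\alpha=\alpha_{\min}^{\infty}(\pi,\omega)$ arises: $\skrih_n^{*}$ can fail to converge to $\skrih_\infty^{*}$ at the boundary of the effective domain, and the paper's epi-convergence step (its citation of \cite[Theorem 5]{Sce48} and \cite[Page 41]{DK02}) is the device that controls this. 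If you keep your primal route you still need an explicit argument that $J_\sigma$ is continuous and monotone on $(\alpha_{\min},\alpha_{av})$ away from $\alpha_{\min}^{\infty}$ and that $J_\sigma=\lim_n R_n$ there; asserting that the threshold ``is handled exactly as in \cite{DK02}'' without the conjugate functions in hand leaves that step unsupported, since the DK02 treatment of the threshold lives entirely on the dual side you have bypassed.
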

\subsection{Application:}\label{Main2}{\bf Wireless  Sensor Network for  Monitoring Water Quality from a Lake.}  Let consider a WSN (to monitor the cleanliness in lakes,
particularly those used as sources of drinking water)  consisting  of   sensors  capable of carrying out some processing,
gathering sensory information and communicating with other connected nodes in the network  modelled  as  coloured geometric random  graph
  on  $n$ location, say $z_1,z_2,...,z_n.$  By $SG$   we  denote  sensors  capable of carrying out some processing,
gathering sensory information   while  communicating with  other  sensors and  $SI$  sensors  gathering  sensory  information while
communicating with  other  sensors. Suppose  the locations  are  $z_1,z_2,...,z_n\in [0,\,1]^d$  partition
into   $n\pi_n(SG)$  block of  $SG$  and  $n\pi_n(SI)$  block of
  $SI,$  and $n\|\omega_n^{\Delta(d)}\|$ number of  communication links divided into
   $n\omega_n^{\Delta(d)}(SG,\, SI),$  $n\omega_n^{\Delta(d)}(SI,\, SG),$ $ n\omega_n^{\Delta(d)}(SG,\, SG)/2,$ $ n\omega_n^{\Delta(d)}(SI,\,SI)/2$
   different interactions, respectively, for  $\Delta(d)$  a  function which  depends on the connectivity radius of the WSN. Assume $\pi_n$  converges $\pi$  and  $\omega_n^{\Delta(d)}$ converges  $\omega^{\Delta(d)}.$   If  we  take  $\sigma(s,r)=(s-r)^2$
        then, by  Theorem~\ref{AEP1}  we  have   the  rate-distortion  of

 \begin{equation}\label{AEP11}
\begin{aligned}
R(P,Q,\alpha)=\left\{ \begin{array}{ll} 0, & \mbox{ if \,$\alpha\ge 2\omega^{\Delta(d)}(SG,\, SI)+\omega^{\Delta(d)}(SG,\, SG)+\omega^{\Delta(d)}(SI,\,SI)
+2\omega^{\Delta(d)}(SI,\, SG)$.}\\

 \infty & \mbox{otherwise,}

\end{array}\right.
\end{aligned}
\end{equation}

 where $\omega^{\Delta(d)}(a,b)= \sfrac{\pi^{d/2}}{\big[d/2\big]!}\lambda_{[d]}(a,b))\pi(a) \pi(b).$   See, \cite{DA16a} for  the
relationship between  the connectivity radius and $\lambda_{[d]}.$ We  refer  to \cite{SB08} for more  on modelling of  the  physical  environment  using the  Wireless  Sensor  Network.

\section{Proof  of  Theorem~\ref{AEP1}.}\label{mainproof}

\subsection{LDPs.}\label{LDPs}\\

Recall  from \cite{DA16b}  that   $X=\big\{(X(u), X(v)):\, uv\in E\big\}$  and
$Y=\big\{ (Y(u), Y(v)):\, uv\in E\big\}$   are  CRG processes  with  values from  $G=G(\skrix)$  and
$\hat{G}=\hat{G}(\skrix),$ resp.,  the  spaces  of  finite graphs on  $\skrix.$

We  define  the  process-level  empirical  measure  $\skril_n$  induced  by  $X$  and  $Y$  on  $G \times\hat{G}$   by
$$\skril_n(\beta_x,\beta_y)=\frac{1}{n}\sum_{v\in [n]}\delta_{\big(\skrib_X(v),\,\skrib_Y(v)\big)}(\beta_x,\beta_y), \, \mbox{ for $(\beta_x,\beta_y)\in\skrim[{(\skrix\times\skrin(\skrix))}^2].$ }$$



\begin{lemma}[Exponential Equivalence]\label{WLLN1}  Suppose   $(X^{[z]},Y^{[z]})$  are CGRG  on  the $d-$  dimensional  Torus
and  $(X,Y)$  are CRG.  Then,   conditional  on  the  event
$\big\{\,\Psi(\skril_{n,^[z],1})=\Psi(\skril_{n,[z],2})=\Psi(\skril_{n,1})=\Psi(\skril_{n,2})=(\pi, \omega)\big\}$   the  law  of
$\skril_{n,[z]}$ is  exponentially  equivalent  to
the  law  of    $\skril_n^.$
\end{lemma}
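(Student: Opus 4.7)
The plan is to unfold the definition of exponential equivalence: for two sequences of random elements $\skril_{n,[z]}$ and $\skril_n$ with values in a metric space $(\skrim[(\skrix\times\skrin(\skrix))^2],d)$ (with $d$ a Lévy--Prohorov or bounded Lipschitz metric inducing the weak topology), it suffices to exhibit a coupling for which, for every $\delta>0$,
\begin{equation*}
\limsup_{n\to\infty}\frac{1}{n}\log\P\Big\{d\big(\skril_{n,[z]},\skril_n\big)>\delta\,\Big|\,\Psi(\skril_{n,[z],i})=\Psi(\skril_{n,i})=(\pi,\omega),\,i=1,2\Big\}=-\infty.
\end{equation*}
Since both measures are empirical averages of Dirac masses at the neighbourhood-type random variables $(\skrib_X(z_v),\skrib_Y(z_v))$ and $(\skrib_X(v),\skrib_Y(v))$ respectively, I will test against a countable convergence-determining family of bounded functions $f$ and reduce the problem to showing that
$\frac{1}{n}\#\{v\in[n]:\,(\skrib_X(z_v),\skrib_Y(z_v))\neq(\skrib_X(v),\skrib_Y(v))\}$
is exponentially negligible under a suitable coupling.

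The coupling I would use keeps the colours at corresponding vertices identified, and matches edges through the following recipe. On the torus $[0,1]^d$, for the CGRG edges are generated from the locations $(z_1,\ldots,z_n)$ through the connectivity radius; under the conditioning on $(\pi,\omega)$ these edges constitute a uniformly random graph with prescribed colour statistics in the geometric model, while the CRG is uniformly random with the same statistics in the exchangeable model. Using the invariance of the Lebesgue measure on the torus under coordinate permutations, one constructs a random relabelling $\tau$ of $[n]$ so that the edge set of $X^{[z]}$, after relabelling, has the same conditional law as $X$ given $\Psi(\skril_{n,1})=(\pi,\omega)$. This is the mechanism behind the LDPs in \cite{DA16b}: the empirical pair measure $\omega$ is a sufficient statistic for the neighbourhood type in both models up to terms of order $o(n)$, because $\omega^{\Delta(d)}(a,b)=\frac{\pi^{d/2}}{[d/2]!}\lambda_{[d]}(a,b)\pi(a)\pi(b)$ is exactly the edge intensity that the CRG is tuned to reproduce.

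Under this coupling, the number of vertices for which the local neighbourhood multisets $L_X(z_v)$ and $L_X(v)$ differ is controlled by the number of vertices whose local configuration deviates from the typical Poisson profile $p_{\pi\omega}$. A standard multinomial concentration estimate (Sanov-type) on the torus, combined with the explicit form of $p_{\pi\omega}$, bounds this count by $\eps n$ except on an event of probability at most $e^{-c(\eps)n}$; similarly for the $Y$-component. Sending $\eps\downarrow 0$ then yields the required exponential smallness of $d(\skril_{n,[z]},\skril_n)>\delta$.

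The main obstacle I anticipate is verifying that the torus geometry really does deliver exchangeability of the neighbourhood types conditionally on $(\pi,\omega)$. In the CRG each vertex is truly exchangeable, but in the CGRG vertices are distinguished by their spatial locations, and edges are far from independent. The argument must therefore combine the translation invariance of the torus (to recover a weak exchangeability of the location-indexed neighbourhood types) with the observation that the connectivity-radius choice embedded in $\Delta(d)$ makes the typical CGRG degree distribution asymptotically $\mathrm{Poisson}(\omega(a,\cdot)/\pi(a))$, matching $p_{\pi\omega}$. Making this matching exponentially tight, rather than merely in probability, is the technical crux, and I would handle it by applying the exponential tightness estimates established in \cite{DA16b} to the vertex-wise deviations, then summing over the $n$ vertices and invoking a union bound together with Chebyshev at the exponential scale.
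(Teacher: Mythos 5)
Your route is genuinely different from the paper's: the paper constructs no coupling at all, but instead introduces an intermediate ``random allocation'' process $(\tilde X,\tilde Y)$ and chains two previously established exponential equivalences --- CGRG to random allocation and random allocation to CRG, citing \cite{DA16a} and \cite{DA14} --- via transitivity. Your direct coupling is more self-contained in spirit, but as written it has a fatal quantitative gap. Exponential equivalence requires, for every fixed $\delta>0$, that $\limsup_n \frac{1}{n}\log\P\{d(\skril_{n,[z]},\skril_n)>\delta\}=-\infty$, i.e.\ a \emph{superexponential} bound on the discrepancy event. Your mechanism bounds the fraction of disagreeing vertices by $\eps$ off an event of probability $e^{-c(\eps)n}$; for fixed $\delta$ this yields at best $\P\{d>\delta\}\le e^{-c(\delta/C)n}$ with a finite rate, and ``sending $\eps\downarrow 0$'' does not improve matters, since the admissible $\eps$ are bounded above by $\delta/C$ and $c$ is increasing in $\eps$. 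An exponentially small discrepancy probability is precisely what exponential equivalence is \emph{not}: it is fully compatible with the two empirical measures obeying different LDPs with different rate functions.

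There is a second gap in the coupling itself. The assertion that, conditional on $(\pi,\omega)$, the CGRG edge set ``constitutes a uniformly random graph with prescribed colour statistics'' is false: edges of a geometric graph are determined by the locations and the connectivity radius, so strong local dependencies (excess triangles, clustering) survive the conditioning on the empirical pair measure, and no relabelling $\tau$ of $[n]$ transports the conditional law of the geometric edge set onto the exchangeable one. What is true --- and what the paper leans on --- is that the \emph{empirical neighbourhood measure} of the CGRG is superexponentially close to that of a matched non-geometric model; but establishing that is exactly the content of the exponential approximation lemmas cited from \cite{DA16a} and \cite{DA14} (the method-of-types / random-allocation comparison), not something that follows from torus translation invariance plus a Sanov-type multinomial estimate. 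A self-contained proof along your lines would have to reprove those results, which is substantially harder than your sketch suggests.
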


\begin{proof}

We  denote  by  $(\tilde{X},\tilde{Y})$   the  random  allocation process and   notice from  \cite[Lemma~3.1]{DA16a}    and  \cite[Lemma~0.4]{DA14}  that  conditional  on
$\big\{\,\Psi(\skril_{n,[z],1})=\Psi(\skril_{n,[z],2})=\Psi(\skril_{n,1})=\Psi(\skril_{n,2})=(\pi,\omega)\big\}$  the
law  of  $(X^{[z]},Y^{[z]})$  is exponentially  equivalent  to  the  law  of $(\tilde{X},\tilde{Y})$    and  the  law of  $(\tilde{X},\tilde{Y})$  is exponentially  equivalent  to  the  law  of  $(X,Y).$  Therefore, conditional  on  $\big\{\,\Psi(\skril_{n,[z],1})=\Psi(\skril_{n,[z],2})=\Psi(\skril_{n,1})=\Psi(\skril_{n,2})=(\pi,\omega)\big\}$  we  have  $(X^{[z]},Y^{[z]})$ exponentially  equivalent  to  $(X,Y).$



\end{proof}

\begin{lemma}[LDP]\label{WLLN2}  Suppose   $(X^{[z]},Y^{[z]})$  are coloured  geometric random  graph  on  the $d-$  dimensional  Torus.  Then,   conditional  on  the  event
$\big\{\,\Psi(\skril_{n,[z],1})=\Psi(\skril_{n,[z],2})=(\pi, \omega)\big\}$   the  law  of
$\skril_{n,[z]}$ obeys  a  process  level  LDP  with  good  rate  function  $J_1$
\end{lemma}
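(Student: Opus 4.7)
\textbf{Proof Proposal for Lemma~\ref{WLLN2}.}  The plan is to reduce the geometric case to the already-understood non-geometric case by invoking the exponential equivalence established in Lemma~\ref{WLLN1}, and then appeal to the standard principle that exponentially equivalent families share the same large deviation principle with the same rate function (see Dembo--Zeitouni, Theorem~4.2.13).

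First I would recall (or quote from \cite{DA16b}) the process-level LDP for the empirical measure $\skril_n$ of the pair of independent CRG processes $(X,Y)$, conditional on $\bigl\{\Psi(\skril_{n,1})=\Psi(\skril_{n,2})=(\pi,\omega)\bigr\}$. Since $X$ and $Y$ are conditioned to have the same empirical colour and pair measures and are independent, the conditional law on $[n]$ factorizes into a product of the conditioned single-graph laws. Combining the single-graph process-level LDP from \cite{DA16b} with the contraction/tensorization principle for products of exponentially tight families yields a process-level LDP for $\skril_n$ with rate function
\[
J(\mu)=H\bigl(\mu\,\|\,p_{\pi\omega}\otimes p_{\pi\omega}\bigr),
\]
provided $\mu$ is consistent and $\mu_{1,1}=\mu_{1,2}=\pi$, and $+\infty$ otherwise.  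This is exactly $J_1$ as defined in \eqref{AEP3}.  Goodness of $J_1$ follows from goodness of the single-graph rate function together with the fact that the constraint set is closed in the weak topology.

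Second, I would invoke Lemma~\ref{WLLN1}: conditional on $\bigl\{\Psi(\skril_{n,[z],1})=\Psi(\skril_{n,[z],2})=\Psi(\skril_{n,1})=\Psi(\skril_{n,2})=(\pi,\omega)\bigr\}$, the laws of $\skril_{n,[z]}$ and $\skril_n$ are exponentially equivalent on $\skrim\bigl[(\skrix\times\skrin(\skrix))^2\bigr]$. Exponential equivalence transfers every LDP with a good rate function, so the process-level LDP for $\skril_n$ established in the first step immediately carries over to $\skril_{n,[z]}$ with the same good rate function $J_1$.

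The main obstacle I anticipate is verifying that the conditioning events line up correctly: Lemma~\ref{WLLN1} conditions on the joint event of both empirical colour/pair measures matching in the geometric and non-geometric models, while the statement of Lemma~\ref{WLLN2} conditions only on the geometric ones. This is handled by observing that, since the CRG construction can be coupled to match any prescribed $(\pi,\omega)$ with probability tending to one exponentially fast (this is the content of the coupling used in \cite{DA14,DA16a}), conditioning on the geometric event alone is equivalent (up to exponentially negligible corrections) to conditioning on the joint event. A second technical point is exponential tightness of $\skril_{n,[z]}$, which, since the empirical measures take values in a compact subset of $\skrim\bigl[(\skrix\times\skrin(\skrix))^2\bigr]$ under the constraint, follows from the compactness of $\skrix$ and standard tightness arguments for Poisson-type marks inherited from $p_{\pi\omega}$.
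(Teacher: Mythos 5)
Your proposal follows essentially the same route as the paper, which omits the detailed argument but states that the lemma follows from Lemma~\ref{WLLN1}, the process-level LDP of \cite{DA16b}, and the exponential-equivalence transfer principle of \cite[Theorem~4.2.13]{DZ98} --- exactly the three ingredients you assemble. Your additional care about aligning the conditioning events and checking goodness of $J_1$ fills in details the paper leaves implicit, but does not constitute a different approach.
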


The proof  of  this  Lemma~\ref{WLLN2} which  follows  from \ref{WLLN1}  \cite[Theorem~]{DA16b}  and  \cite[Theorem~4.2.13]{DZ98}, is  omitted  from  the  paper.

\subsection{Derivation  of  the  AEP.}\label{AEPs} We  write  $\skrim:=\skrim[(\skrix \times \skrin(\skrix))^2]$  and  define  the  set  $\skric^{\eps}$  by
$$\displaystyle\skric_{\pi\omega}^{\eps}=\Big\{ \mu\in\skrim\colon
\sup_{\beta_x,\beta_y\in\skrix\times\skrin(\skrix)} |\mu(\beta_x,\,\beta_y) - p_{\pi\omega}\otimes p_{\pi\omega}(\beta_x,\,\beta_y)| \ge \eps\Big\}.$$

\begin{lemma}[SLLN]\label{WLLN7}\label{SLLN}  Suppose  the  sequence of  measures  $(\pi_n,\omega_n)$  converges to  the
pair  of  measures  $(\pi_n,\omega_n).$
For any $\eps>0$ we have  $\lim_{n\to\infty} \P_{(\pi_n,\omega_n)}\big(\skric_{\pi\omega}^{\eps}\big)=0.$

 \end{lemma}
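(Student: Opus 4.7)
The strategy is to deduce the concentration statement as a direct consequence of the large-deviation upper bound in Lemma~\ref{WLLN2}, after identifying $p_{\pi\omega}\otimes p_{\pi\omega}$ as the unique zero of the rate function $J_1$ and showing that $J_1$ is uniformly bounded away from zero on $\skric^{\eps}_{\pi\omega}$.

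First I would check that the reference product measure $p_{\pi\omega}\otimes p_{\pi\omega}$ itself satisfies the consistency and marginal constraints built into the definition of $J_1$, so that $J_1(p_{\pi\omega}\otimes p_{\pi\omega})=H(p_{\pi\omega}\otimes p_{\pi\omega}\,\|\,p_{\pi\omega}\otimes p_{\pi\omega})=0$; together with the strict positivity of relative entropy away from its reference measure, this shows that $p_{\pi\omega}\otimes p_{\pi\omega}$ is the unique minimiser of $J_1$.

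Second, using the goodness of $J_1$ provided by Lemma~\ref{WLLN2} (sublevel sets $\{J_1\le c\}$ are compact in the weak topology on $\skrim$), together with the fact that the bulk of the mass of $p_{\pi\omega}\otimes p_{\pi\omega}$ is carried by a finite collection of atoms of $(\skrix\times\skrin(\skrix))^2$, I would argue that the weak closure $\overline{\skric^{\eps}_{\pi\omega}}$ still excludes a neighbourhood of $p_{\pi\omega}\otimes p_{\pi\omega}$. Compactness of the sublevel sets then forces
\[
\delta \;:=\; \inf_{\mu\in\overline{\skric^{\eps}_{\pi\omega}}} J_1(\mu) \;>\; 0,
\]
and the LDP upper bound of Lemma~\ref{WLLN2} delivers
\[
\limsup_{n\to\infty}\frac{1}{n}\log \P_{(\pi,\omega)}\big(\skric^{\eps}_{\pi\omega}\big)\;\le\;-\delta\;<\;0,
\]
from which $\P_{(\pi,\omega)}\big(\skric^{\eps}_{\pi\omega}\big)\to 0$ is immediate.

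To transfer the conclusion from the fixed pair $(\pi,\omega)$ to the convergent sequence $(\pi_n,\omega_n)\to(\pi,\omega)$, I would invoke the continuity of $(\pi,\omega)\mapsto p_{\pi\omega}$ in the weak topology together with the exponential equivalence of Lemma~\ref{WLLN1}, so that a small perturbation of the conditioning is absorbed into a slightly smaller radius $\eps'<\eps$ for all sufficiently large $n$. The principal obstacle lies in this final step: Lemma~\ref{WLLN2} is stated for fixed conditioning, so some uniformity in the parameter $(\pi,\omega)$ is needed to justify exponential decay along the varying sequence $(\pi_n,\omega_n)$. Continuity of the rate function $J_1$ in $(\pi,\omega)$, combined with the nested ball structure of $\skric^{\eps}_{\pi\omega}$ around the continuously moving centre $p_{\pi\omega}\otimes p_{\pi\omega}$, should handle this without further difficulty.
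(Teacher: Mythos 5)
Your proposal follows essentially the same route as the paper: observe that $\skric_{\pi\omega}^{\eps}$ is closed, apply the upper bound of the LDP in Lemma~\ref{WLLN2}, and use the goodness and lower semicontinuity of $J_1$ together with the fact that $p_{\pi\omega}\otimes p_{\pi\omega}$ is its unique zero to conclude that $\inf_{\mu\in\skric_{\pi\omega}^{\eps}}J_1(\mu)>0$, whence the probability decays exponentially. The only difference is cosmetic (you take the infimum directly on compact sublevel sets where the paper argues by contradiction with a minimising sequence), and your closing remark about uniformity in the varying conditioning $(\pi_n,\omega_n)$ is in fact a point the paper passes over in silence.
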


Observe  that  $\skric_{\pi\omega}^{\eps}$  defined  above  is  a closed  subset  of  $\skrim$  and  so  by  Lemma~\ref{WLLN2}  we  have  that

\begin{equation}\label{AEP7}
\limsup_{n\to\infty}\frac{1}{n}\log\P_{(\pi_n,\omega_n)}\big(\skric_{\pi\omega}^{\eps}\big)\leq  -\inf_{\mu\in\skric^{\eps}}J_1(\mu).
\end{equation}

We use  proof  by contradiction  to  show that the right hand side of \eqref{AEP7} is negative.Suppose that there exists sequence  $\mu_n$ in $\skric_{\pi\omega}^{\eps}$ such that
$J_1(\mu_n)\downarrow 0.$ Then, there is a limit point $\mu\in F_1$ with $J_1(\mu)=0.$ Note $J_1$ is a good rate function and its level sets are compact,
and the mapping $\mu\mapsto J_1(\mu)$) lower semi-continuity. Now $J_1(\mu)=0$  implies $\mu(\beta_x,\,\beta_y)=p_{\pi\omega}\otimes p_{\pi\omega}(\beta_x,\,\beta_y),$  for  all   $\beta_x,\beta_y\in\skrix\times\skrin(\skrix) $ which contradicts  $\mu\in\skric_{\pi\omega}^{\eps}$.

(i)    Notice  $\displaystyle\sigma^{(n)}(X^{[z]},Y^{[z]})=\langle\sigma, \,\skril_{n,[z]}\rangle$  and  if  $\Lambda$  is open (closed)   subset  of  $\skrim$ then  $$ \Lambda_{\sigma}:=\big\{ \mu: \langle\sigma, \,\mu\rangle\in \Lambda\big\}$$ is  also open (closed) set  since  $\sigma$  is  bounded function.

$$\begin{aligned}
-\inf_{t\in In(\Lambda)}J_{\sigma}(t)&=-\inf_{\mu\in\ln(\Lambda_{\sigma})}J_1(\mu)\\
&\le\liminf_{n\to\infty}\sfrac{1}{n}\log\P \Big\{\sigma^{(n)}(X^{[z]},Y^{[z]})
\in\Lambda\big |X^{[z]}=x,\,\Psi(\skril_{n,[z],1})=\Psi(\skril_{n,[z],2})=(\pi_n,\omega_n)\Big\}\\
 &\le\lim_{n\to\infty}\sfrac{1}{n}\log\P \Big\{\sigma^{(n)}(X^{[z]},Y^{[z]})\in\Lambda\big |X^{[z]}=x,\,\Psi(\skril_{n,[z],1})=\Psi(\skril_{n,[z],2})=(\pi_n,\omega_n)\Big\}\\
 &\le\limsup_{n\to\infty}\sfrac{1}{n}\log\P\Big\{\sigma^{(n)}(X^{[z]},Y^{[z]})\in\Lambda\big |X^{[z]}=x,\,\Psi(\skril_{n,[z],1})
 =\Psi(\skril_{n,[z],2})=(\pi_n,\omega_n)\Big\}\\
 &\le -\inf_{\mu\in cl(\Lambda_{\sigma})}J_1(\mu)=-\inf_{t\in cl(\Lambda)}J_{\sigma}(t).
 \end{aligned}$$

(ii)  Observe  that  $\sigma$ are  bounded, therefore  by  Varadhan's  Lemma and  convex duality, we  have
$$R( \P_x^{\pi\omega}, \P_y^{\pi\omega}, \alpha)=\sup_{t\in\R}[t\alpha-\skrih_{\infty}(t)]=\skrih_{\infty}^{*}(\alpha)$$
where
$$\skrih_{\infty}^{*}(t):=\lim_{n\to\infty}\sfrac{1}{n}\log \int e^{nt\Big\langle\sigma, \,\skril_{n,[z]}\Big\rangle}Q_y^{(n)}(dy)$$
exits  for  $\P$ almost  everywhere  $x.$  Using  bounded  convergence,  we  can  show  that
$$\skrih_{\infty}(t):=\lim_{n\to\infty}\skrih_n(t)=\lim_{n\to\infty}\sfrac{1}{n}\int \Big[\log \int e^{nt\Big\langle\sigma, \,
\skril_{n,[z]}\Big\rangle}Q_y^{(n)}(dy)\Big]P_x^{(n)}(dx).$$
Using   Lemma~\ref{WLLN7},  by  boundedness of  $\sigma$ we  have  that
$$\sfrac{1}{n}\skrih_n(nt)=\frac{1}{n}\sum_{j=1}^{n}\log\me_{Q_y^{(n)}}\big(e^{t\sigma(\skrib_x(j),\skrib_y(j)}\big)\to\langle
\log \langle e^{t\sigma(\skrib_{X^{[z]}},\skrib_{Y^{[z]}})},p_{\pi\omega}\rangle,p_{\pi\omega}\rangle=\alpha_{av}(\pi, \omega).$$
Also let  $$\alpha_{min}^{(n)}(\pi, \omega):=\lim_{t\downarrow-\infty}\sfrac{\skrih_n(t)}{t}$$
so  that  $\skrih_n^{*}(\alpha)=\infty$ for  $\alpha< \alpha_{min}^{(n)}(\pi, \omega)$,  while   $\skrih_n^{*}(\alpha)<\infty$ for  $\alpha>\alpha_{min}^{(n)}(\pi, \omega).$
 Observe  that  for  $n<\infty$  we  have   $\alpha_{min}^{(n)}(\pi, \omega)=\me_{P_x^{(n)}}\big[\essinf_{Y\,\approx\, Q_y^{(n)}}\sigma^{(n)}(X^{[z]},Y^{[z]})\big],$
 which  converges to  $\alpha_{min}(\pi, \omega).$
 Applying  similar  arguments  as  \cite[Proposition~2]{DK02}  we  obtain
  $$ R_n(P_x^{(n)},Q_y^{(n)},\alpha)=\sup_{t\in\R}\big(t\alpha-\skrih_n(t)\big):=\skrih_{n}^{*}(\alpha)$$

Now  we   observe  from \cite[Page 41]{DK02}  that  the  converge  of  $\skrih_{n}^{*}(\cdot)\to\skrih_{\infty}(\cdot)$  is
 uniform on  compact  subsets  of  $\R.$  Moreover, $\skrih_{n}$   is convex,  continuous  functions    converging  informally  to  $\skrih_{\infty}$  and  hence  we  can invoke  \cite[Theorem 5]{Sce48}  to  obtain

 $$\skrih_{n}^{*}(\alpha)=\lim_{\delta\to 0}\limsup_{n\to\infty}\inf_{|\hat{\alpha}-\alpha|<\delta}\skrih_{n}^{*}(\hat{\alpha}).$$

  Applying similar  arguments as \cite[Page 41]{DK02} in  the lines  after  equation  (64)  we  have  \eqref{AEP11}  which  completes  the  proof.

{\bf \Large Conflict  of  Interest}

The  author  declares  that  he has  no  conflict  of  interest.\\

{\bf \Large  Acknowledgement}

This  extension  has  been  mentioned  in the author's  PhD Thesis  at  University  of  Bath.



\end{document}